\begin{document}
%
\title{Optimal Preamble Length for Spectral Efficiency in Grant-Free RA with Massive MIMO
}

\author{\IEEEauthorblockN{Jie Ding, Daiming Qu, and Hao Jiang }
\IEEEauthorblockA{
School of Electronic Information and Communications\\
Huazhong University of Science and Technology, Wuhan, P. R. China\\
Email: yxdj2010@gmail.com}
\thanks{This work was supported in part by the National Natural Science Foundation of China under Grant 61701186 and Grant 61571200 and in part by the China Postdoctoral Science Foundation funded project under Grant 2017M612458.}}

\maketitle

\begin{abstract}
Grant-free random access (RA) with massive MIMO is a promising RA technique for massive access with low signaling overhead. In the grant-free
RA with massive MIMO, preamble length has a critical impact on the performance of the system. In this paper, the optimal preamble length is investigated to maximize spectral efficiency (SE) of the grant-free RA with massive MIMO, where effects of the preamble length on the preamble collision and preamble overhead as well as channel estimation accuracy are taken into account. Simulation results agree well with our analyses and confirm the existence of optimal preamble length for SE maximization in the grant-free RA with massive MIMO.
Moreover, properties of the optimal preamble length with respect to system parameters are revealed. Compared to the granted access, it is shown that longer preamble length is required for SE maximization in the grant-free RA. 

\end{abstract}


%
\IEEEpeerreviewmaketitle

\section{Introduction}
 Future wireless communication systems not only aim to enhance the traditional mobile broadband use case, but also to meet the requirements of new emerging use cases, such as Internet of Things (IoT) \cite{1}\cite{2}. As an enabler of the IoT, machine-to-machine (M$2$M) communications have been a topic of interest for researchers in recent years. In M$2$M, the amount of random access (RA) user equipments (UEs) is massive and their payloads are usually short and sporadic in nature. Therefore, fulfilling the demand of massive access with low signaling overhead and access delay is a key technological issue for M$2$M communications \cite{3}.

The legacy request-grant RA in long term evolution (LTE) was only designed to provide reliable access to a small number of UEs with long packets to transmit \cite{4}. Considering small-sized packets in M$2$M communications, however, the request-grant RA becomes not cost-effective because it brings in relatively long waiting time before data transmissions for RA UEs. To support M$2$M communications, several modifications and improvements have been proposed \cite{8, 9, 34}.
Nevertheless,
this type of LTE based RA still suffers from wireless resource scarcity, which is a fundamental bottleneck for enabling massive access.

To effectively manage M$2$M communications, grant-free RA (also known as one-stage RA) with massive multiple-input multiple-output (MIMO) is now being considered as a compelling alternative \cite{20}. In the grant-free RA, request-grant procedure in the legacy RA is omitted and RA UEs contend (i.e., perform RA) with their uplink payloads directly by transmitting preamble along with data. As a result, signaling overhead and access delay are minimized, and the radio resources reserved for the request-grant procedure could be unleashed for accommodating more RA UEs. On the other hand,
massive MIMO has been a main focus of recent research for its capability of mitigating wireless resource scarcity \cite{14}.
With all the benefits manifested in \cite{16,17,18}, features of massive MIMO could be exploited to effectively accommodating multiple access in grant-free RA.
Therefore, the grant-free RA with massive MIMO exhibits potential advantages towards addressing RA issues for future wireless communications.

The work in \cite{20} confirmed the effectiveness of grant-free RA with massive MIMO in accommodating massive access with low signaling overhead. It also indicated that longer preamble length leads to larger number of orthogonal preambles and lower preamble collision rate, therefore resulting in higher success probability for grant-free RA UEs. Nevertheless, longer preamble length does not necessarily bring on better spectral efficiency (SE) for the grant-free RA with massive MIMO.
In fact, the preamble length has three conflicting effects on the SE of grant-free RA. For instance, with fixed grant-free RA packet length, increasing the preamble length will 1) lower the preamble collision rate; 2) improve the accuracy of channel estimation; but 3) reduce the amount of data symbols. Thus, the preamble length should be investigated for a good tradeoff among preamble collision and
preamble overhead as well as channel estimation accuracy.

In this paper, optimal preamble length is explored to maximize SE in the grant-free RA with massive MIMO. Specifically,
we first derive the average SE in terms of the preamble length in the grant-free RA with massive MIMO, where effects of the preamble length on the preamble collision and preamble overhead as well as channel estimation accuracy are taken into account. Based on the obtained SE, we then analyze and derive the optimal preamble length for its maximization. Simulation results verify the accuracy of our analyses and confirm the existence of optimal preamble length for SE maximization in the grant-free RA with massive MIMO.
Moreover, properties of the optimal preamble length with respect to system parameters are revealed. Compared to the granted access, it is shown that longer preamble length is required for SE maximization in the grant-free RA. 

The remainder of this paper is organized as follows. In Section II,
the grant-free RA with massive MIMO is briefly described. In Section III, the SE in terms of the preamble length is given and derivation of the optimal preamble length for SE maximization is detailed.
Simulation results are
presented in Section IV and the work is concluded in Section V.
\normalem

\emph{Notations: }Boldface lower and upper case symbols represent
vectors and matrices, respectively. $\mathbf{I}_n$ is the $n \times n$ identity matrix. The trace, conjugate, transpose, and complex conjugate transpose
operators are denoted by $\mathrm{tr}(\cdot)$, $(\cdot)^{*}$, $(\cdot)^{\mathrm{T}}$ and $(\cdot)^{\mathrm{H}}$. $\mathbb{E}[\cdot]$ denotes the expectation operator.

\section{Grant-Free Random Access with Massive MIMO}
In this section, the system model with massive MIMO is introduced and the procedure of grant-free RA with massive MIMO is illustrated.

In Fig. \ref{fig1}, a single-cell massive MIMO system is considered, where BS is configured with $M$ active antenna elements and independent single-antenna RA UEs are uniformly distributed in the cell. In each RA slot, we assume that $N$ RA UEs are active to perform the grant-free RA simultaneously over the same channel. 
\captionsetup[figure]{name={Fig.},labelsep=period}
\begin{figure}[!h]
\centering
\includegraphics[width=3.0in]{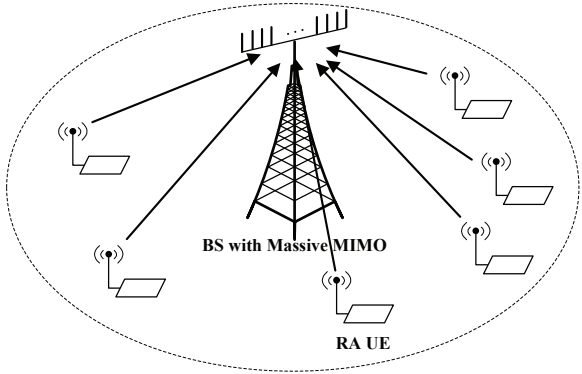}
\caption{Illustration of massive MIMO system model.} \label{fig1}
\end{figure}

Since $M$ is sufficiently large in massive MIMO, favorable propagation (FP) can be approximately achieved. The feature of FP enables spatial multiplexing of multiple RA UEs over the same channel \cite{14}. Specifically, simple linear processing, such as conjugate beamforming (CB) and zero-forcing beamforming (ZFB), could be applied at the BS, to discriminate the signal transmitted by each RA UE from the signals of other RA UEs.
In this paper, CB is employed due to its low detection complexity.
\begin{figure}[!h]
\centering
\includegraphics[width=2.7in]{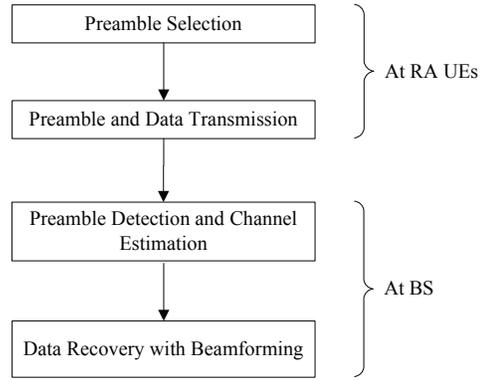}
\caption{Procedure of grant-free RA with massive MIMO.} \label{fig2}
\end{figure}

In Fig. \ref{fig2}, the procedure of grant-free RA with massive MIMO is briefly described. Specifically, the $N$ RA UEs contend with their uplink payloads over the same channel by directly transmitting RA preamble along with data. The RA preamble of each RA UE is randomly chosen out of an RA preamble pool, which is used by the BS for preamble detection and channel estimation. We assume that there are $P$ orthogonal RA preambles available in the pool. If the chosen preamble by an RA UE is different from the ones by other RA UEs, the RA UE could be detected and its channel response could be estimated at the BS. Otherwise, if multiple RA UEs choose the same preamble, preamble collision occurs and we assume that all these RA UEs would not be detected and their channel estimations would be failed.
At the BS, after preamble detection and channel estimation, CB is then used for data recovery.

\begin{figure}[!h]
\centering
\includegraphics[width=2.3in]{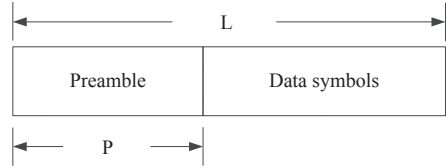}
\caption{Grant-free RA packet structure.} \label{fig3}
\end{figure}

In Fig. \ref{fig3}, grant-free RA packet structure is illustrated. Herein, we assume that the packet length for each RA UE is $L$, where the preamble length is $P$ and the data length is $L-P$. With a fixed $L$, it is plain to see that the preamble length $P$ has a direct effect on the preamble collision and preamble overhead as well as channel estimation accuracy. Since these three performance indicators have different degrees of impact on the SE in the grant-free RA with massive MIMO, it is necessary to investigate the optimal preamble length for SE maximization, which will be detailed in the next section.

\vspace{0.5mm}
\section{SE Maximization }
Taking into account effects of the preamble length on the preamble collision and preamble overhead as well as channel estimation accuracy, expression of average SE for an arbitrary RA UE in terms of the preamble length is derived and the optimal preamble length is obtained accordingly in this section.

In this paper, a block independent Rayleigh fading propagation model is considered, where the propagation channels are assumed constant within length $L$.
The channel response vector between an arbitrary RA UE and the BS is modelled by $\mathbf{g}=\sqrt{\ell}\mathbf{h} \in \mathbb{C}^{M}$, where $\ell$ denotes the large scale fading coefficient between RA UE and BS, and $\mathbf{h} \sim \mathcal{CN}(0,\mathbf{I}_{M})$ stands for the small scale fading vector between RA UE and BS. Moreover, perfect power control is assumed so that all RA UEs have the same expected receive power at the BS.

\subsection{Preamble Collision}

As stated in Section II, when the preamble collision occurs, i.e., multiple RA UEs select the same preamble, the BS is unable to acquire the correct channel responses of these RA UEs and their data transmissions would be failed. Under the condition, their SE is considered as zero as a result. In other words, only RA UEs without experiencing preamble collision have the chance to get their data recovered by the BS. For the $i$th RA UE, its probability that no preamble collision occurs is written by
\begin{align}\label{eq1}
{P}_{\mathrm{no\_collision}}=(1-\frac{1}{P})^{N-1}.
\end{align}

\subsection{Channel Estimation}

At the BS, the received preamble signal $\mathbf{Y} \in \mathbb{C}^{M \times P}$ is
\begin{align}\label{eq2}
\mathbf{Y}=\sum_{i=1}^{N}\sqrt{P_\mathrm{R}}\mathbf{h}_i\mathbf{p}^{\mathrm{T}}_i+\mathbf{N},
\end{align}
where $P_{\mathrm{R}}$ is the expected receive power from each RA UE at each BS antenna. $\mathbf{{h}}_{i}$ is the small scale fading vector between the $i$th RA UE and the BS. $\mathbf{p}_i \in \mathbb{C}^{P}$ is preamble vector transmitted by the $i$th RA UE and $\mathbf{p}^{\mathrm{H}}_i\mathbf{p}_i=P$. $\mathbf{{N}}$ is the noise matrix with i.i.d. elements distributed
as $\mathcal{CN}(0,\sigma^2)$. We denote the uplink signal to noise ratio (SNR) at the BS corresponding to each RA UE by $\rho_{\mathrm{R}}\triangleq P_{\mathrm{R}}/\sigma^2$.

For the $i$th RA UE without experiencing preamble collision, its preamble could be detected successfully thanks to the mutual orthogonality among preambles. With least-squares (LS)-based channel estimation, its estimated channel is given by
\begin{align}\label{eq3}
\mathbf{\hat{h}}_i=\frac{1}{P\sqrt{P_{\mathrm{R}}}}\mathbf{Y}\mathbf{p}_{i}^{*}=\mathbf{{h}}_i+\frac{1}{\sqrt{\rho_{\mathrm{R}}P}}\mathbf{{n}},
\end{align}
where $\mathbf{{n}}=\frac{\mathbf{N}\mathbf{p}_{i}^{*}}{\sigma\sqrt{P}}$ a noise vector distributed as $\mathcal{CN}(0,\mathbf{I}_M)$.

\subsection{Average SE in terms of Preamble Length}

At the BS, the received data signal $\mathbf{r} \in \mathbb{C}^{M}$ is
\begin{align}\label{eq4}
\mathbf{r}=\sum_{i=1}^{N}\sqrt{P_\mathrm{R}}\mathbf{h}_ix_i+\bar{\mathbf{n}},
\end{align}
where $\bar{\mathbf{n}}$ a noise vector distributed as $\mathcal{CN}(0,\sigma^2\mathbf{I}_M)$. $x_{i} $ is data symbol transmitted by the $i$th RA UE and $\mathbb{E}[|x_{i}|^2]=1$.

For the $i$th RA UE without experiencing preamble collision, its detected data symbol with CB is given by
\begin{align}\label{eq5}
\hat{x}_{i}=&\mathbf{\hat{h}}_i^{\mathrm{H}}\mathbf{r} \nonumber\\
=&\sqrt{P_\mathrm{R}}\mathbf{\hat{h}}_i^{\mathrm{H}}\mathbf{h}_ix_i+\sum_{k=1, k\neq i}^{N}\sqrt{P_\mathrm{R}}\mathbf{\hat{h}}_i^{\mathrm{H}}\mathbf{h}_kx_k+\mathbf{\hat{h}}_i^{\mathrm{H}}\bar{\mathbf{n}} \nonumber\\
=&\sqrt{P_\mathrm{R}}\mathbb{E}[\mathbf{\hat{h}}_i^{\mathrm{H}}\mathbf{h}_i]x_i+\sqrt{P_\mathrm{R}}\big(\mathbf{\hat{h}}_i^{\mathrm{H}}\mathbf{h}_i-\mathbb{E}[\mathbf{\hat{h}}_i^{\mathrm{H}}\mathbf{h}_i]\big)x_i\nonumber\\
&+\sum_{k=1, k\neq i}^{N}\sqrt{P_\mathrm{R}}\mathbf{\hat{h}}_i^{\mathrm{H}}\mathbf{h}_kx_k+\mathbf{\hat{h}}_i^{\mathrm{H}}\bar{\mathbf{n}}.
\end{align}

With (\ref{eq5}), the instantaneous signal to interference and noise ratio (SINR) of the $i$th RA UE is given by

\begin{align}\label{eq6}
&\mathrm{SINR}_{i} \nonumber\\
=& \frac{P_\mathrm{R}\mathbb{E}[\mathbf{\hat{h}}_i^{\mathrm{H}}\mathbf{h}_i]^2}{P_\mathrm{R}\big|\mathbf{\hat{h}}_i^{\mathrm{H}}\mathbf{h}_i-\mathbb{E}[\mathbf{\hat{h}}_i^{\mathrm{H}}\mathbf{h}_i]\big|^2
+\sum_{k=1, k\neq i}^{N}P_\mathrm{R}|\mathbf{\hat{h}}_i^{\mathrm{H}}\mathbf{h}_k|^2+|\mathbf{\hat{h}}_i^{\mathrm{H}}\bar{\mathbf{n}}|^2}.
\end{align}
Since $M$ is sufficiently large in massive MIMO, the asymptotic
deterministic equivalence of SINR is obtained as (\ref{eq7}) on the top of the next page \cite{28}.
After some straightforward calculations, it is simplified as
\newcounter{mytempeqncnt}
\begin{figure*}[!t]
\normalsize
\setcounter{mytempeqncnt}{\value{equation}}
\setcounter{equation}{6}
\begin{align}
\label{eq7}
\overline{\mathrm{SINR}}_{i}
= \frac{P_\mathrm{R}\mathbb{E}[\mathbf{\hat{h}}_i^{\mathrm{H}}\mathbf{h}_i]^2}{P_\mathrm{R}\mathbb{E}\Big[\big|\mathbf{\hat{h}}_i^{\mathrm{H}}\mathbf{h}_i-\mathbb{E}[\mathbf{\hat{h}}_i^{\mathrm{H}}\mathbf{h}_i]\big|^2\Big]
+\sum_{k=1, k\neq i}^{N}P_\mathrm{R}\mathbb{E}\Big[|\mathbf{\hat{h}}_i^{\mathrm{H}}\mathbf{h}_k|^2\Big]+\mathbb{E}\Big[|\mathbf{\hat{h}}_i^{\mathrm{H}}\bar{\mathbf{n}}|^2\Big]}.
\end{align}
\setcounter{equation}{\value{mytempeqncnt}}
\hrulefill
\vspace*{4pt}
\end{figure*}
\setcounter{equation}{7}

\begin{align}\label{eq7}
\overline{\mathrm{SINR}}_{i}
=\frac{M\rho_{\mathrm{R}}}{\frac{N}{P}+(N-1)\rho_{\mathrm{R}}+1+\frac{1}{\rho_{\mathrm{R}}P}}.
\end{align}
Thus, the asymptotic SE of the $i$th RA UE without experiencing preamble collision is given by
\begin{align}\label{eq8}
\mathrm{SE}_{i}
=(1-\frac{P}{L})\log_2(1+\overline{\mathrm{SINR}}_{i}),
\end{align}
where the pre-log factor represents the preamble overhead.

With (\ref{eq1}) and (\ref{eq8}), the average SE of the $i$th RA UE is obtained as
\begin{align}\label{eq9}
\mathrm{ASE}_{i}=&{P}_{\mathrm{no\_collision}}\mathrm{SE}_{i} \nonumber\\
=&(1-\frac{1}{P})^{N-1}(1-\frac{P}{L})\log_2(1+\overline{\mathrm{SINR}}_{i}).
\end{align}

Obviously, connection between the average SE and $P$ is established in (\ref{eq9}). In the sequel, we will derive the optimal $P$ by maximizing (\ref{eq9}).

\subsection{Optimal Preamble Length}
The SE maximization problem for the $i$th RA UE is expressed as
\begin{equation}\label{eq10}
\begin{aligned}
& \underset{P}{\text{maximize}}
& & \mathrm{ASE}_{i}(P) \\
& \text{subject to} & & 1 \leq P \leq
L.
\end{aligned}
\end{equation}

To find the optimal solution of (\ref{eq10}), we first provide the
properties of $\mathrm{ASE}_{i}(P)$ in the following lemma.

\newtheorem{Lemma}{Lemma}
\begin{Lemma}
The optimal preamble length $P^{*}$ of (\ref{eq10}) must be located on the interval  $[P_1, L]$, where $P_1=\frac{-N+2+\sqrt{N^2+4N(L-1)+4-4L}}{2}$.
\end{Lemma}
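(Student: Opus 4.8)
The plan is to exploit the multiplicative structure of $\mathrm{ASE}_i(P)$: write $\mathrm{ASE}_i(P) = f(P)\,g(P)$, where $f(P) \triangleq (1-\tfrac{1}{P})^{N-1}(1-\tfrac{P}{L})$ and $g(P) \triangleq \log_2(1+\overline{\mathrm{SINR}}_i(P))$, both strictly positive for $P \in (1,L)$. If I can show that $g$ is strictly increasing on all of $[1,L]$ and that $f$ is strictly increasing on $[1,P_1]$, then $\mathrm{ASE}_i$, being a product of two positive increasing functions there, is strictly increasing on $[1,P_1]$; hence no point of $[1,P_1)$ can be a maximizer of (\ref{eq10}), which is precisely the claim.

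First I would dispatch $g$: from (\ref{eq7}) the denominator of $\overline{\mathrm{SINR}}_i$ equals $(N+\tfrac{1}{\rho_{\mathrm{R}}})\tfrac{1}{P} + ((N-1)\rho_{\mathrm{R}}+1)$, an affine function of $1/P$ with positive coefficients and therefore strictly decreasing in $P$, so $\overline{\mathrm{SINR}}_i$ and hence $g$ are strictly increasing on $[1,L]$. Next I would study $f$ through its logarithmic derivative,
\[
\frac{f'(P)}{f(P)} = \frac{N-1}{P(P-1)} - \frac{1}{L-P}.
\]
On $(1,L)$ both denominators and $f(P)$ itself are positive, so $f'(P)$ has the same sign as $(N-1)(L-P) - P(P-1)$, i.e., the opposite sign of $q(P) \triangleq P^2 + (N-2)P - (N-1)L$. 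The upward-opening parabola $q$ satisfies $q(1) = (N-1)(1-L) \le 0$ and $q(L) = L(L-1) > 0$, so it has a unique root in $(1,L)$; solving $q(P)=0$ and simplifying the discriminant via the identity $(N-2)^2 + 4(N-1)L = N^2 + 4N(L-1) + 4 - 4L$ yields exactly that root as the stated $P_1$. Consequently $f' > 0$ on $(1,P_1)$ and $f' < 0$ on $(P_1,L)$, so $f$ is strictly increasing on $[1,P_1]$.

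Combining the two observations, $\mathrm{ASE}_i$ is strictly increasing on $[1,P_1]$, so $\mathrm{ASE}_i(P) < \mathrm{ASE}_i(P_1)$ for every $P \in [1,P_1)$; since $\mathrm{ASE}_i$ is continuous on the compact interval $[1,L]$, its maximizer $P^{*}$ must lie in $[P_1,L]$. The only genuinely computational part is the algebra that identifies the positive root of $q$ with the displayed $P_1$, together with the sign checks $q(1) \le 0$ and $q(L) > 0$ that place $P_1$ in $(1,L)$; the conceptual leverage is the factorization $\mathrm{ASE}_i = fg$ and the realization that $g$ never pulls the product downward, so only the turning point of $f$ matters. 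I therefore expect the main obstacle to be this (routine) bookkeeping on $q$ rather than anything structurally difficult.
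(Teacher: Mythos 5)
Your proof is correct and follows essentially the same route as the paper: the same factorization $\mathrm{ASE}_i = f\cdot g$, the observation that $g'>0$ and $f'>0$ on $[1,P_1)$, and the conclusion that $\mathrm{ASE}_i'=f'g+fg'>0$ there, forcing $P^*\in[P_1,L]$. You merely make explicit the sign analysis of $f'$ via the quadratic $q(P)$ that the paper asserts without computation, and you omit the paper's additional concavity argument on $[P_1,L]$, which is not needed for the lemma itself (it only supports the subsequent use of Newton's method).
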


\begin{proof}
With (\ref{eq9}), we rewrite $\mathrm{ASE}_{i}(P)$ as
\begin{align}\label{eq11}
\mathrm{ASE}_{i}(P)=f(P)g(P),
\end{align}
where $f(P)=(1-\frac{1}{P})^{N-1}(1-\frac{P}{L})$
and
$g(P)=\log_2\big(1+\overline{\mathrm{SINR}}_{i}(P)\big)$. Notice that, on the interval $P \in [1, L)$, 1) $f(P)>0$ and $g(P)>0$; 2)  $f'(P_1)=0$ and $g'(P)>0$;
3) $g(P)$ is strictly concave due to $g''(P)<0$.

Since $f(P)$ and $g(P)$ are both twice-differentiable, the derivative function of $\mathrm{ASE}_{i}(P)$ is given by
\begin{align}\label{eq12}
\mathrm{ASE}'_{i}(P)=f'(P)g(P)+f(P)g'(P),
\end{align}
and its second derivative is
\begin{align}\label{eq13}
\mathrm{ASE}''_{i}(P)=f''(P)g(P)+f(P)g''(P)+2f'(P)g'(P).
\end{align}

On the interval $P \in [1, P_1)$, $f'(P)>0$. With (\ref{eq12}), we have that $\mathrm{ASE}'_{i}(P)>0$.
Therefore, $\mathrm{ASE}_{i}(P)$ is strictly monotonically increasing on that interval.

On the interval $P \in [P_1, L]$, it is easy to prove that $f'(P)\le 0$ and $f''(P)< 0$. With (\ref{eq13}), $\mathrm{ASE}''_{i}(P)<0$.
As a result, $\mathrm{ASE}_{i}(P)$ is strictly concave on that interval. Moreover,
$\mathrm{ASE}'_{i}(P_1)>0$ and $\mathrm{ASE}'_{i}(L)<0$. We see that $\mathrm{ASE}_{i}(P)$ first increases and then decreases on that interval.

From these properties, the optimal preamble length $P^{*}$ must be located on the interval  $[P_1, L]$.

We conclude the proof.
\end{proof}

Based on Lemma 1, Newton's method is used to obtain the optimal preamble length $P^{*}$ of (\ref{eq10}), where $P_1$ is set to be an initial solution of (\ref{eq10}).

\newtheorem{remark}{Remark}
\begin{remark}
When $\rho_{\mathrm{R}}P^{*}\gg 1$, we consider that channels are estimated with enough accuracy. The impact of $P^{*}$ on $\mathrm{SINR}_{i}$ is thus trivial and the preamble collision and preamble overhead have the dominant impact on the SE. Therefore, $P^{*}$ is very close to $P_1$.
\end{remark}

\section{Numerical Results}
In this section, the SE of RA UE in terms of the preamble length is discussed and the
accuracy of the analyses in Section III is verified by numerical results.
Simulation parameters are summarized in Table \ref{table1}.
\begin{table}[!h]
\renewcommand{\arraystretch}{2}
\caption{Simulation parameters}\label{table1} \centering
\begin{tabular}{>{\centering}m{4cm}|>{\centering}m{4cm}|>{\centering}m{4cm}|}
\hline

\multicolumn{1}{|c|}{Number of BS antennas $M$}  & \multicolumn{2}{c|}{$100 \thicksim 500$} \tabularnewline
\hline

\multicolumn{1}{|c|}{{Number of RA UEs per Channel $N$} } & \multicolumn{2}{c|}{{$10$} } \\
\hline

\multicolumn{1}{|c|}{Payload Length $L$}  & \multicolumn{2}{c|}{$200$} \tabularnewline

\hline

\multicolumn{1}{|c|}{Uplink SNR $\rho_{\mathrm{R}}$ (dB)}  & \multicolumn{2}{c|}{$-20 \thicksim 10$} \tabularnewline

\hline
\end{tabular}
\end{table}
\begin{figure}[!ht]
\centering
\includegraphics[width=3.7in]{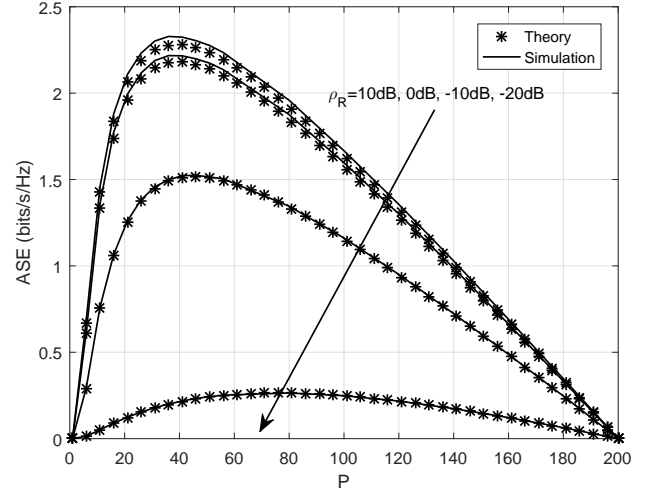}
\caption{SE of an RA UE versus the preamble length $P$, with different $\rho_\mathrm{R}$. $M=100$.} \label{fig4}
\end{figure}

In Fig. \ref{fig4}, SEs as a function of $P$ with $M=100$ are illustrated, where
the simulated SEs are compared with the corresponding analytic approximations
in (\ref{eq9}) with different  $\rho_\mathrm{R}$.
As observed,
the simulation results closely match with the analytic ones, which verify the accuracy of our derivations in Section III.
It is also noticed that each curve increases first and then decreases, and there exists an optimal value of $P$ for SE maximization, which agree with the analyses in Lemma 1.
Considering the tight agreement between the simulations and
our analyses in Section III, we adopt the analytic expression of (\ref{eq9})
in the following discussions.

\begin{figure}[!ht]
\centering
\includegraphics[width=3.7in]{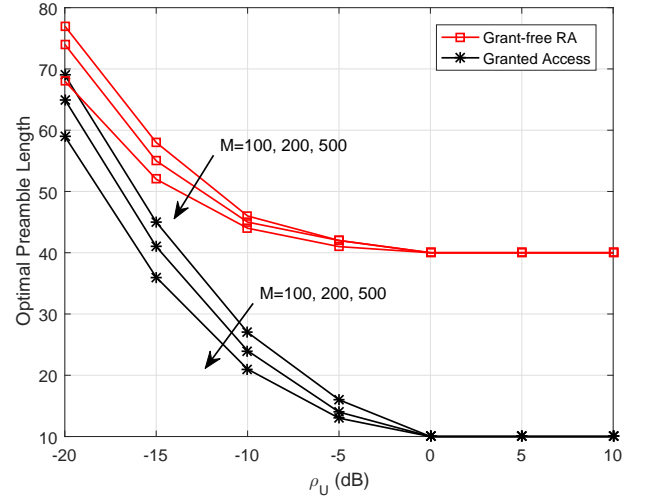}
\caption{Optimal preamble length versus $\rho_\mathrm{R}$, with different $M$.} \label{fig5}
\end{figure}

In Fig. \ref{fig5}, optimal preamble lengths for different $\rho_\mathrm{R}$ and $M$ are plotted.
In this figure, we compare the optimal preamble length in grant-free RA with that in granted access.
In the granted access, each UE has been assigned an unique preamble for uplink transmission and there is no preamble collision between UEs.
Therefore, the SE of the granted access is only affected by the channel estimation accuracy and preamble overhead. As a result, its corresponding optimal preamble length $\hat{P}^{*}$ can be achieved by maximizing (\ref{eq8}), where $ N \le P \le L$.
As shown in the figure, both optimal preamble lengths decrease to their minimum values as $\rho_\mathrm{R}$ increases. When $\rho_\mathrm{R} \ge 0$dB, $P^{*}$ keeps unchanged approximately at $P_1$ in the grant-free RA, which validates Remark 1.
Compared to the granted access, we see that longer preamble length is required for the grant-free RA due to the impact of preamble collision. Besides, it is observed that increasing $M$ has little benefit in reducing the required optimal preamble length of the grant-free RA with massive MIMO, especially when $\rho_\mathrm{R}$ is ranging from $-10$dB to $10$dB.

\begin{figure}[!ht]
\centering
\includegraphics[width=3.7in]{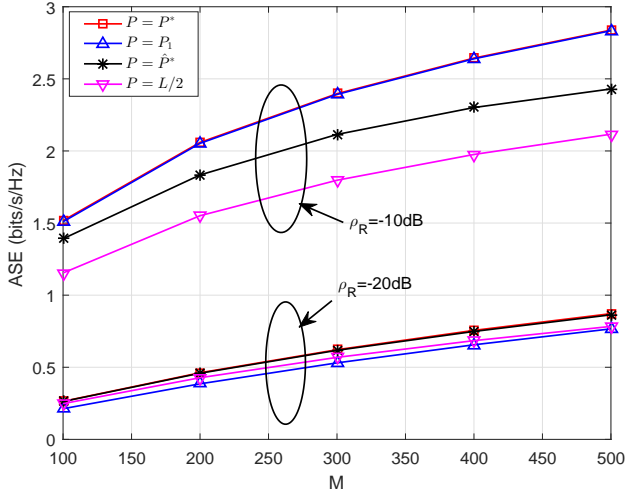}
\caption{SE of an RA UE versus the number of BS antennas $M$, with different $\rho_\mathrm{R}$.} \label{fig6}
\end{figure}

The SE of grant-free RA with respect to $M$ is shown in Fig. \ref{fig6}. Four different values of $P$ are considered: 1) the optimal preamble length in the grant-free RA with massive MIMO, i.e., $P=P^{*}$; 2) $P=P_1$; 3) the optimal preamble length in the granted access, i.e., $P=\hat{P}^{*}$; 4) $P=L/2$. Among them, it is plain to see that $P=P^{*}$ always achieves the best the SE performance. Moreover,
at $\rho_\mathrm{R}=-20$dB, $P=\hat{P}^{*}$ has the closest SE performance to that of $P=P^{*}$. This is due to the fact that, when $\rho_\mathrm{R}$ is extremely low, channel estimation accuracy and preamble overhead have more dominate impact on the SE of grant-free RA than the preamble collision and therefore
$\hat{P}^{*}$ is close to ${P}^{*}$, as shown in Fig. \ref{fig5}. On the other hand, at $\rho_\mathrm{R}=-10$dB, the SE performance of $P=P_1$ almost overlaps with that of $P=P^{*}$. This complies with the observation in Fig. \ref{fig5}, indicating that $P^{*}$ is very close to $P_1$ when $\rho_\mathrm{R}\ge -10$dB.
Combining the observations in Fig. \ref{fig5} and Fig. \ref{fig6}, it is concluded that, 1) when $\rho_\mathrm{R}$ is small, although $P^{*}>\hat{P}^{*}$, adopting $\hat{P}^{*}$ as the preamble length is enough to achieve close-to-optimal SE performance in the grant-free RA with massive MIMO; 2) when $\rho_\mathrm{R}$ is moderate or high ($\rho_\mathrm{R}\ge -10$dB), the use of $P_1$ as the preamble length is essentially optimum.

\section{Conclusions}
In this paper, we investigate the optimal preamble length for SE maximization in grant-free RA with massive MIMO. By
taking into account the effects of preamble length on the preamble collision and preamble overhead as
well as channel estimation accuracy, the analytic expression of SE in grant-free RA with massive MIMO is obtained. To maximize the SE,
optimal preamble length is analyzed and derived. Simulation results agree well with our analyses and
confirm the existence of optimal preamble length for SE maximization in the grant-free RA with massive MIMO.
Properties of the optimal preamble length with respect to system parameters are also revealed. As an example shown in the simulation,
$P^{*}  \approx P_1$ when $\rho_\mathrm{R}$ is moderate or high and $P^{*} \approx \hat{P}^{*}$ when $\rho_\mathrm{R}$ is small. 
Moreover, it is shown that longer preamble length is required for the SE maximization in the grant-free RA, compared to the granted access.



%
\normalem

\end{document}